\newtheorem*{theorem*}{Theorem}
\newtheorem*{lemma*}{Lemma}
\newtheorem{propo}{Proposition}
\DeclareMathOperator{\Tr}{Tr}
\providecommand{\openone}{\leavevmode\hbox{\small1\kern-3.8pt\normalsize1}}
\begin{document}
\title{Steering Criteria via Covariance Matrices of Local Observables in Arbitrary Dimensional Quantum Systems}
\author{Se-Wan Ji, Jaehak Lee, Jiyong Park and Hyunchul Nha }
 \affiliation{Department of Physics, Texas A$\&$M University at Qatar University, PO Box 23784, Doha, Qatar}

\begin{abstract}
We derive steerability criteria applicable for both finite and infinite dimensional quantum systems using covariance matrices of local observables. We show that these criteria are useful to detect a wide range of entangled states particularly in high dimensional systems and that the Gaussian steering criteria for general $M\times N$-modes of continuous variables are obtained as a special case. Extending from the approach of entanglement detection via covariance matrices, our criteria are based on the local uncertainty principles incorporating the asymmetric nature of steering scenario.  Specifically, we apply the formulation to the case of local orthogonal observables and obtain some useful criteria that can be straightforwardly computable, and testable in experiment, with no need for numerical optimization. 
\end{abstract} 
\pacs{03.65.Ud, 03.67.Mn, 42.50.Dv}
\maketitle

In quantum world, there exist some strong correlations that cannot be described in classical ways providing thereby a crucial basis for applications, e.g. in quantum information processing. Among different forms of quantum correlations, the most well studied are quantum entanglement \cite{Horodecki} and nonlocality \cite{Brunner}. Nonlocality is the strongest correlation that does not admit any local realistic models \cite{Bell}, in which  
the joint probability for the outcomes $a$ and $b$ of local measurements $A$ and $B$, respectively, are explained by 
\begin{eqnarray}
P_{\rm LHV}(a,b|A,B)=\sum_\lambda p_\lambda P_\lambda(a|A)P_\lambda(b|B),
\end{eqnarray}
where a hidden-variable $\lambda$ is chosen according to the distribution $p_\lambda$. 
On the other hand, quantum entanglement is the correlation distinguished from classical correlation {\it within the framework of quantum mechanics}. That is, if a quantum state shows correlation that cannot be explained by the form 
\begin{eqnarray}
P_Q(a,b|A,B)=\sum_\lambda p_\lambda P_\lambda^Q(a|A)P_\lambda^Q(b|B),
\end{eqnarray}
where the superscript $Q$ refers to the restriction to quantum statistics only, it is called quantum entangled.

Recently, an intermediate form of correlation between quantum entanglement and nonlocality was rigorously defined in \cite{Wiseman07}---quantum steering---and it has attracted a great deal of interest during the past decade. The concept of quantum steering envisions a situation where Alice performs a local measurement on her system, which makes it possible to steer Bob's local state depending on her choice of measurement setting \cite{EPR,Sch}. This notion is practically relevant when Bob wants to confirm quantum correlation although he cannot trust Alice or her devices at all \cite{Wiseman07}, leading to some applications, e.g. one-sided device-independent cryptography \cite{Branciard} and sub-channel discrimination \cite{Piani15}. In view of joint probability distribution, steering is the quantum correlation that can rule out the local hidden state (LHS) models,  
\begin{eqnarray}
P_{\rm LHS}(a,b|A,B)=\sum_\lambda p_\lambda P_\lambda(a|A)P_\lambda^Q(b|B),
\end{eqnarray}  
where Alice's statistics $P_\lambda(a|A)$ is unrestricted while Bob' statistics $P_\lambda^Q(b|B)$ obeys quantum principles. $P_{\rm LHS}$ is obviously a subset of $P_{\rm LHV}$ as seen from its construction, which makes EPR steering more accessible in experiment than nonlocality \cite{Smith}. 
There have been other remarkable works on quantum steering including its connection to measurement incompatibility \cite{Quintino} and the phenomenon of one-way steering  \cite{Brunner14,Handchen}, etc.. However, we need to have a more comprehensive set of steering criteria readily testable particularly for higher-dimensional systems, which may bring us a deeper understanding of quantum correlation. 

In this work, we introduce steering criteria based on covariance matrices of local observables that can be applied to bipartite quantum systems of arbitrary dimensions. This approach is an extension from the entanglement detection via covariance matrices \cite{Guhne07,Gitt,Guhne} by incorporating the asymmetric nature of quantum steering and local uncertainty relations. 
In particular, we apply our formalism to local orthogonal observables and derive some useful criteria that can be readily computable, and also practically testable, without doing numerical optimizations. We illustrate the usefulness of our methods by detecting steerability of some higher-dimensional states, for which few criteria are known so far. Moreover, we show that our method leads to the Gaussian steering criteria for general $M\times N$-modes of continuous variables (CVs) as a special case \cite{Wiseman07, Adesso15, JKN15}.


{\it Non-steerability}---Let us begin with the notion of non-steerablity. Assume that two separate observers, Alice and Bob, share a bipartite quantum state $\rho_{AB}$ on $ {\mathcal H} = {\mathcal H}_A \otimes {\mathcal H}_B$, where $d_A$ $\left( d_B \right) $ is the dimension of $ {\mathcal H}_A$ $\left( {\mathcal H}_B \right)$. If this state is nonsteerable from Alice to Bob, then the joint probabilities of local measurements on two observers can be written as \cite{Wiseman07}
\begin{equation}
\label{nonsteer}
\begin{array}{lcl}
P \left( a, b | A_k, B_{l} \right) & = & Tr\left[ \rho_{AB} A_k(a) \otimes B_l(b) \right] \\
\\
 & = & \sum_{\lambda} P(\lambda) P(a| A_k, \lambda ) Tr\left[ \rho_{\lambda} B_l(b) \right],
\end{array}
\end{equation} 
where Alice's and Bob's POVMs are denoted by $ \left\{ A_k \right\} $ and $ \left\{ B_l \right\} $ respectively.  In this case, the (un-normalized) conditional state held by Bob, when Alice performs the measurement $ A_{k} $ with the outcome $a$, is given by
\begin{equation}
\label{assemblage}
\rho_{a|k}=\sum_{\lambda}P\left( \lambda \right)P\left(a|A_k,\lambda \right) \rho_{\lambda}.
\end{equation}  
The probability of Alice getting the output $a$ for measurement $A_k$ is given by $ Tr\left( \rho_{a|k} \right) =\sum_{\lambda} P(\lambda)P\left(a|A_k,\lambda\right)$, and the set of un-normalized states $\left\{ \rho_{a|k} \right\} $ is referred to as an assemblage \cite{Piani15}. If the assemblage for a given state $\rho_{AB}$ is written in the form of Eq. (\ref{assemblage}) for all measurements $ \left\{ A_k \right\}$ and outcomes $a$, we say that the correlation of the given state $\rho_{AB}$ can be explained by the local hidden state (LHS) model \cite{Wiseman07}.   \\
\\
{\it Local uncertainty relations}---Let $\left\{ \hat{A}_k\right\}$ be the observables on ${\mathcal H}_A$. If they do not have a common eigenstate, there exists a nontrivial, state-independent, bound $C_A > 0$ such that $ \sum_k \delta^2 \left( \hat{A}_k\right)_{\rho_A} =\sum_k \langle \hat{A}_k^2 \rangle -\langle \hat{A}_k \rangle^2 \geq C_A $ holds for all states $\rho_A$ on ${\mathcal H}_A$ \cite{HofTak03}, which is the so-called local uncertainty relation (LUR). For example, if we consider the three Pauli operators $ \sigma_x,\sigma_y,\sigma_z$ in a qubit system, then $C_A =2$. Another example is the case of two observables $ X^{(N)}=\frac{1}{\sqrt{2}}\left( a^N +a^{\dagger N} \right), P^{(N)}=\frac{-i}{\sqrt{2}}\left( a^N-a^{\dagger N} \right)$ in continuous variable systems, for which we have $C_A =N!$ due to $ \left[ X^{(N)}, P^{(N)} \right] = i\sum_{r=1}^{N}r! \left( \begin{array}{cc} N \\
r \end{array} \right)^2 a^{\dagger N-r}a^{N-r}$ \cite{JKN15, Hillery87}. The simplest case of $N=1$ for the operators $ X^{(N)}, P^{(N)}$ correspond to two orthogonal quadrature amplitudes. \\

The following is a non-steerability criterion that is based on LURs. 
\begin{lemma*}
(Steering criteria with local uncertainty relations) - If a given bipartite quantum state $\rho_{AB}$ satisfies Eqs. (\ref{nonsteer}) and (\ref{assemblage}), i.e., nonsteerable from Alice to Bob (from Bob to Alice), the following inequality must be satisfied 
\begin{equation}
\label{LUR0}
\sum_k^N \delta^{2} \left( \hat{A}_k \otimes \openone + \openone \otimes \hat{B}_k \right) \geq C_B\; \left(C_A\right), 
\end{equation}
where $C_B$ is a strict positive lower bound of LURs \cite{HofTak03} in Bob's Hilbert space, i.e.,
\begin{equation}
\label{BLUR}
\sum_{k} \delta^{2} \left( \hat{B}_k \right) \geq C_B > 0.
\end{equation}
\end{lemma*}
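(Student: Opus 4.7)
The plan is to expand the left-hand side of (7) directly inside the LHS ansatz of (4)-(5), then exploit the fact that Bob's hidden states $\rho_\lambda$ are genuine quantum states so that the LUR (8) applies state by state, while Alice's responses $P(a|A_k,\lambda)$ are only classical distributions for which no LUR can be invoked. Define $M_k\equiv\hat{A}_k\otimes\openone+\openone\otimes\hat{B}_k$ and, for each $\lambda$, set $a_\lambda=\sum_a a\,P(a|A_k,\lambda)$, $b_\lambda=\Tr(\rho_\lambda\hat{B}_k)$, $\sigma_A^2(\lambda)=\sum_a a^2P(a|A_k,\lambda)-a_\lambda^2$, and $\sigma_B^2(\lambda)=\delta^2(\hat{B}_k)_{\rho_\lambda}$. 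Using (4) to compute the first and second moments of $M_k$ on $\rho_{AB}$, a direct expansion of $M_k^2=\hat{A}_k^2\otimes\openone+2\hat{A}_k\otimes\hat{B}_k+\openone\otimes\hat{B}_k^2$ yields
\begin{equation*}
\delta^2(M_k)=\sum_\lambda P(\lambda)\bigl[\sigma_A^2(\lambda)+\sigma_B^2(\lambda)\bigr]+V_k,
\end{equation*}
where $V_k=\sum_\lambda P(\lambda)(a_\lambda+b_\lambda)^2-\bigl(\sum_\lambda P(\lambda)(a_\lambda+b_\lambda)\bigr)^2\geq 0$ is the between-$\lambda$ variance of the local means, nonnegative by Jensen's inequality. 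This identity is essentially the law of total variance adapted to the hidden-variable decomposition.

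Summing over $k=1,\ldots,N$, discarding $V_k\geq 0$, and swapping the order of summation gives
\begin{equation*}
\sum_k\delta^2(M_k)\geq\sum_\lambda P(\lambda)\Bigl[\sum_k\sigma_A^2(\lambda)+\sum_k\sigma_B^2(\lambda)\Bigr].
\end{equation*}
The Alice piece is a sum of classical variances, bounded below only by $0$, whereas for each fixed $\lambda$ the operators $\{\hat{B}_k\}$ are evaluated on the bona fide quantum state $\rho_\lambda$, so that (8) applies and gives $\sum_k\sigma_B^2(\lambda)\geq C_B$. Averaging over $\lambda$ then yields the announced bound $\sum_k\delta^2(M_k)\geq C_B$, and the symmetric ``Bob to Alice'' statement with $C_A$ follows by exchanging the roles of the two parties.

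The only subtlety is the careful bookkeeping distinguishing classical from quantum moments at each $\lambda$: Alice's conditional distribution $P(\cdot|A_k,\lambda)$ need not arise from any quantum state, and this is precisely what prevents an analogous $C_A$-bound from appearing on Alice's side, embodying the asymmetry inherent to steering. There is no deeper analytic obstacle; the argument is essentially an LUR bound combined with a concavity-of-variance inequality, tailored to the one-sided device-independent setting.
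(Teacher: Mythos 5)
Your proof is correct. It is the same underlying argument as the paper's, but you make explicit the one step the paper delegates to references: the paper introduces Alice's inferred variance $\delta_{inf}^{2}(\hat{B}_k)$, asserts that nonsteerability implies $\sum_k \delta_{inf}^2(\hat{B}_k)\geq C_B$ ``following the method in'' Reid et al.\ and Cavalcanti et al., and then specializes a linear estimator with gains $g_k=1$. You instead expand $\delta^2(\hat{A}_k\otimes\openone+\openone\otimes\hat{B}_k)$ directly under the LHS decomposition and invoke the law of total variance, so that the between-$\lambda$ variance $V_k$ and Alice's classical variances are discarded while the LUR is applied to each genuine quantum state $\rho_\lambda$ on Bob's side. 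That is precisely the content of the cited step, so your version is self-contained where the paper's is not, and it isolates cleanly why no $C_A$ appears on Alice's side. The only thing you lose relative to the paper's route is the intermediate one-parameter family of inequalities, Eq.~(\ref{gLUR}) with arbitrary gains $g_k$, which the paper reuses in the Appendix to optimize over $g$ and rederive Proposition 1; your argument recovers it immediately by writing $g_k\hat{A}_k$ in place of $\hat{A}_k$ throughout. One implicit assumption worth flagging (shared with the paper) is that the moments $\langle\hat{A}_k^2\rangle$ and $\langle\hat{B}_k^2\rangle$ are identified with $\sum_a a^2P(a|A_k)$ and $\sum_b b^2P(b|B_k)$, i.e.\ the observables are measured projectively so that the LHS form of the outcome distributions controls the operator second moments.
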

\begin{proof}
We consider a situation where Alice attempts to infer Bob's measurement outcome by performing a measurement on her subsystem. Let $\hat{B}_k^{e}\left(\hat{A}_k \right)$ be Alice's estimate of the outcome of Bob's measurement $\hat{B}_k$ as a function of the outcome of her measurement $\hat{A}_k$. The average inferred variance of $\hat{B}_k$ given estimate $B_k^{e}\left( \hat{A}_k \right)$ is defined as 
\begin{equation}
\label{inferred}
\delta_{inf}^{2}\left(\hat{B}_k \right)=\langle \left[ \hat{B}_k -\hat{B}_k^e \left( \hat{A}_k \right) \right]^2 \rangle.
\end{equation}
Here the average is taken over all possible outcomes $a_k$ and $b_k$. It can be readily shown that if a given quantum state $\rho_{AB}$ is nonsteerable from Alice to Bob, then $\sum_k \delta_{inf}^2\left(\hat{B_k} \right) \geq C_B$ following the method in \cite{reid2, Cavalcanti09}.  With a linear estimate $ \hat{B}_k^e \left( \hat{A}_k \right)= -g_k \hat{A}_k + \langle \hat{B}_k + g_k \hat{A}_k \rangle$ \cite{reid2, Cavalcanti09}, where $\left\{g_k\right\}$ are arbitrary real numbers, we have
\begin{equation}
\label{gLUR}
\sum_k \delta^2\left( g_k \hat{A}_k \otimes \openone + \openone \otimes \hat{B}_k \right) \geq C_B.
\end{equation}
Then if we set $g_1=...=g_N=1$, we obtain the desired inequality in Eq. (\ref{LUR0}).
\end{proof}

{\it Covariance Matrix}---Let $\rho$ be a given quantum state and let $ \left\{ O_k\; : k=1,...,N \right\}$ be some observables. Then the elements of $ N \times N$ symmetric covariance matrix $\gamma$ are defined by
\begin{equation}
\label{CM}
\gamma_{i,j}=\left( \langle O_i O_j \rangle + \langle O_j O_i \rangle \right)/2 -\langle O_i \rangle \langle O_j \rangle.
\end{equation}
\\
 Now, let us consider a total set of observables in a composite system, $\left\{ O_k \right\}=\left\{ A_k \otimes \openone, \openone \otimes B_k \right\}$ to construct the covariance matrix. Then the covariance matrix $ \gamma_{AB}$ with $\left\{ O_k \right\}$  has the block form
\begin{equation}
\label{BCM}
\gamma_{AB} \left( \rho_{AB},\, \left\{O_k\right\} \right)=\left[ \begin{array}{cc}
A & C \\
C^{T} & B \\
\end{array} \right],
\end{equation}
where $A=\gamma\left( \rho_A, \left\{ A_k \right\} \right)$ and $B= \gamma \left( \rho_B, \left\{ B_k \right\} \right) $ are covariance matrices for the reduced states $\rho_A$ and $\rho_B$, respectively, and  the correlation matrix $C$ has the entries $C_{k,l}=\langle A_k \otimes B_l \rangle -\langle A_k \rangle \langle B_l \rangle $. 

\begin{theorem*}
If a given bipartite quantum state $ \rho_{AB}$ is non-steerable from Alice to Bob, its covariance matrix $ \gamma_{AB}$ satisfies
\begin{equation}
\label{covariance}
\gamma_{AB} \geq { {\bf 0}_{A}} \oplus \kappa_B,
\end{equation} 
where $\kappa_B= \sum_k p_k \gamma \left( |b_k\rangle \langle b_k| \right)$, $|b_k\rangle$ are the states on ${\mathcal H}_B$, and $\sum_k p_k |b_k \rangle\langle b_k | =\rho_B$ is Bob's reduced state. 
\end{theorem*}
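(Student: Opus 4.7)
The plan is to reduce the matrix inequality $\gamma_{AB}\ge\mathbf{0}_A\oplus\kappa_B$ to a family of scalar Reid-type bounds, one for each pair of test vectors, and to prove each such bound with the same linear-estimator trick that drove the preceding Lemma. The starting identity is that for any real vectors $u\in\mathbb{R}^{N_A}$ and $v\in\mathbb{R}^{N_B}$, writing $w:=(u,v)^{T}$, a direct expansion of Eq.~(\ref{CM}) gives $w^{T}\gamma_{AB}w=\delta^{2}(\hat{A}_u\otimes\openone+\openone\otimes\hat{B}_v)$, where $\hat{A}_u:=\sum_k u_k\hat{A}_k$ and $\hat{B}_v:=\sum_l v_l\hat{B}_l$, while $w^{T}(\mathbf{0}_A\oplus\kappa_B)w=\sum_k p_k\,\delta^{2}(\hat{B}_v)_{|b_k\rangle}$. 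Consequently, the theorem is equivalent to proving, for every $u$ and $v$,
\begin{equation*}
\delta^{2}\!\left(\hat{A}_u\otimes\openone+\openone\otimes\hat{B}_v\right)\ge\sum_k p_k\,\delta^{2}(\hat{B}_v)_{|b_k\rangle}.
\end{equation*}

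Next, I would invoke non-steerability from Alice to Bob for the specific Alice observable $\hat{A}_u$: by assumption there is a fixed ensemble $\{q_\lambda,\rho_\lambda\}$ with $\sum_\lambda q_\lambda\rho_\lambda=\rho_B$ and response functions $p(a|\hat{A}_u,\lambda)$ such that $\rho_{a|\hat{A}_u}=\sum_\lambda q_\lambda\,p(a|\hat{A}_u,\lambda)\rho_\lambda$. Spectrally decomposing each mixed $\rho_\lambda$ refines this to a pure-state LHS ensemble $\{p_k,|b_k\rangle\}$ that supplies the theorem's decomposition of $\rho_B$. I would then apply the linear estimator $\hat{B}_v^{e}(a)=-a+\langle\hat{A}_u+\hat{B}_v\rangle$ used in the Lemma, which turns the left-hand side into the inferred variance $\delta_{\rm inf}^{2}(\hat{B}_v)=\sum_{a,k}p_k\,p(a|\hat{A}_u,k)\,\langle b_k|(\hat{B}_v-\hat{B}_v^{e}(a))^{2}|b_k\rangle$.

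The closing ingredient is the pointwise bound $\langle b_k|(\hat{B}_v-c)^{2}|b_k\rangle=\delta^{2}(\hat{B}_v)_{|b_k\rangle}+(\langle\hat{B}_v\rangle_{|b_k\rangle}-c)^{2}\ge\delta^{2}(\hat{B}_v)_{|b_k\rangle}$, valid for any real $c$. Applied with $c=\hat{B}_v^{e}(a)$, summed over $a$ using $\sum_a p(a|\hat{A}_u,k)=1$, and then averaged over $k$ with weights $p_k$, it reproduces precisely the scalar inequality above. Since $u$ and $v$ were arbitrary, the matrix inequality in Eq.~(\ref{covariance}) follows.

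The main obstacle is conceptual rather than computational: one must recognise that the Lemma's mechanism, phrased for a fixed finite set of observables, extends verbatim to arbitrary linear combinations $\hat{A}_u,\hat{B}_v$, and that sweeping $(u,v)$ through all real vectors is exactly what upgrades a scalar family of Reid-type bounds into a matrix inequality. A minor technical caveat concerns the refinement of the LHS ensemble to pure states $|b_k\rangle$: because the covariance matrix is concave in the state, the pure-state $\kappa_B$ is actually a weaker bound than the one built from the original (possibly mixed) hidden states, so the statement as written is safely implied by the argument.
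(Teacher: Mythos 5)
Your proof is correct, but it takes a genuinely different route from the paper's. The paper argues by contraposition through a separation theorem: assuming $\gamma_{AB}\notin T$ (the convex cone of matrices $\mathbf{0}_A\oplus\kappa_B+P$, $P\geq 0$), a Hahn--Banach corollary supplies a witness $W\geq 0$ whose spectral decomposition yields observables $\{\hat{A}_k\},\{\hat{B}_k\}$ with $\Tr(W\gamma_{AB})=\sum_k\delta^2(\hat{A}_k\otimes\openone+\openone\otimes\hat{B}_k)<R<C_B$, i.e.\ a violation of the Lemma's LUR criterion, hence steerability. You instead prove the matrix inequality directly, as the equivalent family of scalar bounds $w^{T}\gamma_{AB}w\geq w^{T}(\mathbf{0}_A\oplus\kappa_B)w$ over all test vectors $w=(u,v)$, each established by the Reid-type inferred-variance argument applied to $\hat{A}_u$, $\hat{B}_v$ with the single ensemble furnished by the LHS model (correctly noting that the same ensemble, refined to pure states, serves all $(u,v)$ simultaneously --- without this the matrix inequality would not assemble). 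Note that your route actually needs, and supplies, a sharper ingredient than the Lemma provides: the state-dependent bound $\delta^2_{\rm inf}(\hat{B}_v)\geq\sum_k p_k\,\delta^{2}(\hat{B}_v)_{|b_k\rangle}$ rather than the universal constant $C_B$; you get it from the pointwise estimate $\langle(\hat{B}_v-c)^2\rangle\geq\delta^2(\hat{B}_v)$ and concavity of the variance. What each approach buys: yours is elementary and self-contained (no separation theorem, no cone $T$) and produces the ensemble $\{p_k,|b_k\rangle\}$ of the theorem statement explicitly from the hidden-state model; the paper's is constructive in a different sense --- it exhibits, for any state violating Eq.~(\ref{covariance}), explicit local observables violating the directly measurable criterion Eq.~(\ref{LUR0}), a byproduct the authors rely on in their Remarks to argue that the covariance criterion is always reducible to an experimentally accessible steering test even though the block $A$ contains moments Alice cannot be trusted to report. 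Your proof does not yield that byproduct, but as a proof of the stated theorem it is complete.
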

\begin{proof}
To prove Theorem, we make use of the techniques that were utilized in Ref.\cite{Guhne07} for entanglement detection.\\
Let us define a set of matrices as $T:=\left\{ t \; | \; t={\bf 0}_A \oplus \kappa_B + P \;\; with \;\; P \geq 0 \right\}$, which forms a closed convex cone. Then {\bf Theorem} is reformulated by saying that if $\rho_{AB}$ is non-steerable from Alice to Bob, then $ \gamma_{AB} \in T$. If a given $\rho_{AB}$ violates the inequality in Eq. (\ref{covariance}), we have $ \gamma_{AB} \notin T$. \\

As indicated by a corollary to the Hanh-Banach theorem, for each $ \gamma_{AB} \notin T$, there exists a symmetric matrix $ W $ and a real number $R$ such that $\Tr \left(W \gamma_{AB} \right) < R$ while
\begin{equation}
\label{Hahn}
\Tr \left( W t \right) > R, \;\; \forall t \in T.
\end{equation}  
Because $ \Tr \left( W P\right) \geq 0$ holds for all $P \geq 0$, we have $ W \geq 0$. Let us use the spectral decomposition of $W=\sum_{k} \lambda_k \vec{\Gamma}^{(k)} \vec{\Gamma}^{(k)T}  \equiv \sum_k \lambda_k \overrightarrow{ \left( \alpha^{(k)} \oplus \beta^{k}\right)} \overrightarrow{ \left(\alpha^{(k)} \oplus \beta^{(k)} \right)}^T$. Introducing $\hat{A}_k=\sqrt{ \lambda_{(k)} }\sum_{i} \alpha^{(k)}_{i} A_i$ and $\hat{B}_{k}=\sqrt{ \lambda_k }\sum_{i} \beta^{(k)}_{i} B_i$ where $ \left\{ A_i \right\}$ and $\left\{ B_i \right\}$ are the observables chosen for the construction of $\gamma_{AB}$, we have for $\rho_{AB}$ that
\begin{equation}
\label{LUR1}
\Tr \left( W \gamma_{AB} \right)=\sum_{k} \delta^2\left( \hat{A}_k \otimes \openone + \openone \otimes \hat{B}_k \right).
\end{equation}
By definition we know that all $ {\bf 0}_A \oplus \kappa_B \in T$, and by the concavity of covariance matrix it follows that all $ {\bf 0}_A \oplus \gamma_B  \in T$.  Thus $ \Tr \left( W \left( {\bf 0}_A \oplus \gamma_B \right) \right) =\sum_k \delta^2 \left( \hat{B}_k \right) >R $. This implies that
\begin{equation}
\label{LUR2}
R\;<\; \min_{\rho_B} \left( \sum_k \delta^2 \left( \hat{B}_k \right)_{\rho_B} \right) = C_B.
\end{equation}
Eventually, since the inequality in Eq. (\ref{covariance}) is violated, $ \gamma_{AB} \notin T$ and $ \sum_k \delta^2 \left( \hat{A}_k \otimes \openone + \openone \otimes \hat{B}_k \right) = \Tr \left( W \; \gamma_{AB} \right)\; < R \;< C_B$, showing a violation of inequality in Eq. (\ref{LUR0}). \\

This means that if a given quantum state $ \rho_{AB}$ violates the inequality in Eq. (\ref{covariance}), there must exist the sets $ \left\{ \hat{A}_k \right\}$, $ \left\{ \hat{B}_k \right\}$ for which the given state violates the inequality in Eq. (\ref{LUR0}) and then $\rho_{AB}$ is steerable. Therefore we conclude that if a given quantum state $\rho_{AB}$ is nonsteerable from Alice to Bob, its covariance matrix $\gamma_{AB}$ must satisfy the inequality in Eq. (\ref{covariance}).  
\end{proof}
In contrast to the non-steerability condition in Eq. (\ref{covariance}), we note that the separability condition reads as $\gamma_{AB} \geq {\kappa}_{A} \oplus \kappa_B$ \cite{Guhne07,Gitt}, where the local covariance matrix ${\kappa}_{A}$ appears due to the restriction to quantum statistics at Alice's station as well.

{\it Local Orthogonal Observables}---We now derive some readily computable steering criteria using {\bf Theorem}. 
For this purpose, let us choose $d_A^2$ observables $\left\{ \hat{A}_k \right\}$ on ${\mathcal H}_A$ such that they satisfy orthogonal relations $ \Tr{\left( \hat{A}_k\,\hat{A}_l \right)}=\delta_{kl}$.  Note that a quantum state $\rho_A$ can then be represented using these observables as $\rho_A=\sum_{k=1}^{d_A^2}\langle\hat{A}_k\rangle_{\rho_A}\hat{A}_k$. Similarly, we take the local observables $\left\{ \hat{B}_k \right\}$ in ${\mathcal H}_B$. These orthogonal observables are called local orthogonal observables (LOOs) \cite{Yu05}. In this case, the lower bound of the inequality (\ref{LUR0}) is given by $C_B=d_B -1$ \cite{Guhne06}. The covariance matrix $ \gamma_{AB}^{LOOs} $ is now constructed with the LOOs, and their partioned blocks are represented by $A^{LOOs}$ and $B^{LOOs}$ for the reduced states, respectively, and $C^{LOOs}$ for the correlation matrix. We then obtain the following result.
  
\begin{propo}
If a given bipartite quantum state $\rho_{AB}$ is nonsteerable from Alice to Bob, the correlation matrix $C^{LOOs}$ constructed with LOOs must satisfy
\begin{equation}
\label{correlation1}
\| C^{LOOs}\|_{tr} \leq \sqrt{\left(d_A -\Tr{\left(\rho_A^2\right)}\right) \left( 1-\Tr{\left( \rho_B^2 \right)}\right)}, 
\end{equation}   
where $ \| A \|_{tr} $ is a trace norm of a matrix $A$.
\end{propo}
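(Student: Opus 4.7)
The plan is to derive Proposition 1 directly from the Theorem by specializing to LOOs and exploiting the trace-norm bound for the off-diagonal block of a positive semidefinite block matrix. I would organize the argument around three observations, whose combination is essentially algebraic.

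First, I would invoke the Theorem to get $\gamma_{AB}^{LOOs}-\mathbf{0}_A\oplus\kappa_B\geq 0$, i.e.\ the block matrix
\begin{equation*}
\begin{pmatrix} A^{LOOs} & C^{LOOs} \\ (C^{LOOs})^T & B^{LOOs}-\kappa_B \end{pmatrix}\geq 0.
\end{equation*}
I would then apply the following standard fact: if $\bigl(\begin{smallmatrix} X & Y \\ Y^T & Z\end{smallmatrix}\bigr)\geq 0$ with $X,Z\geq 0$, then $Y=X^{1/2}KZ^{1/2}$ for some contraction $K$, and hence by the Cauchy--Schwarz inequality for the trace norm,
\begin{equation*}
\|Y\|_{tr}\;\leq\;\|X^{1/2}\|_{HS}\,\|Z^{1/2}\|_{HS}\;=\;\sqrt{\Tr(X)\,\Tr(Z)}.
\end{equation*}
Specializing to $X=A^{LOOs}$, $Z=B^{LOOs}-\kappa_B$, $Y=C^{LOOs}$ reduces the proof to a computation of three traces.

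Second, I would compute $\Tr(A^{LOOs})$ and $\Tr(B^{LOOs})$ using the defining orthonormality $\Tr(\hat A_k\hat A_l)=\delta_{kl}$. The completeness relation $\sum_k(\hat A_k)_{ab}(\hat A_k)_{cd}=\delta_{ac}\delta_{bd}$ gives $\sum_k\hat A_k^{\,2}=d_A\,\openone$, so $\sum_k\langle\hat A_k^2\rangle=d_A$. Using $\rho_A=\sum_k\langle\hat A_k\rangle\hat A_k$ one also has $\sum_k\langle\hat A_k\rangle^2=\Tr(\rho_A^2)$. Therefore $\Tr(A^{LOOs})=d_A-\Tr(\rho_A^2)$, and analogously $\Tr(B^{LOOs})=d_B-\Tr(\rho_B^2)$.

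Third, I would compute $\Tr(\kappa_B)$. Since each $|b_k\rangle$ is pure, $\sum_i\langle b_k|\hat B_i|b_k\rangle^2=\Tr\bigl((|b_k\rangle\langle b_k|)^2\bigr)=1$, so $\Tr\gamma(|b_k\rangle\langle b_k|)=d_B-1$, and hence $\Tr\kappa_B=\sum_kp_k(d_B-1)=d_B-1$. Combining these gives $\Tr(B^{LOOs}-\kappa_B)=1-\Tr(\rho_B^2)$, and plugging into the trace-norm inequality yields (\ref{correlation1}).

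I expect the main obstacle to be just the block-matrix inequality $\|Y\|_{tr}\leq\sqrt{\Tr(X)\Tr(Z)}$; it is standard but worth stating cleanly (via the factorization $Y=X^{1/2}KZ^{1/2}$ with $\|K\|\leq 1$, followed by $|\Tr(UY)|\leq\|X^{1/2}\|_{HS}\|KZ^{1/2}U\|_{HS}\leq\sqrt{\Tr X\,\Tr Z}$ for every unitary $U$, then optimizing over $U$ to recover $\|Y\|_{tr}$). The rest of the argument is a matter of identifying the right LOO identities, which are by now routine.
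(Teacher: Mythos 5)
Your proposal is correct and follows essentially the same route as the paper: apply the Theorem to get $\gamma_{AB}^{LOOs}-\mathbf{0}_A\oplus\kappa_B\geq 0$, invoke the off-diagonal-block trace-norm bound $\|\Lambda_{12}\|_{tr}^2\leq\|\Lambda_{11}\|_{tr}\|\Lambda_{22}\|_{tr}$ (which the paper cites from Horn--Johnson and you re-derive via the contraction factorization), and evaluate the same three traces $\Tr(A^{LOOs})=d_A-\Tr(\rho_A^2)$, $\Tr(B^{LOOs})=d_B-\Tr(\rho_B^2)$, $\Tr(\kappa_B)=d_B-1$. The only cosmetic difference is your self-contained proof of the block-matrix inequality in place of the citation.
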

\begin{proof}
If a partitioned matrix in block form is positive semidefinite, 
\begin{equation}
\label{block}
\Lambda=\left( \begin{array}{cc}
\Lambda_{11} & \Lambda_{12} \\
\Lambda_{12}^T & \Lambda_{22} \end{array} \right) \geq 0,
\end{equation}
we have the relation $\|\Lambda_{12} \|_{tr}^2 \leq \|\Lambda_{11} \|_{tr}\,\|\Lambda_{22}\|_{tr}$ \cite{Horn}. We thus find, due to the inequality (\ref{covariance}), 
\begin{equation}
 \begin{array} {ll}
\| C^{LOOs}\|_{tr} & \leq \sqrt{ \|A^{LOOs}\|_{tr} \|B^{LOOs}-\kappa_B^{LOOs} \|_{tr}  } \\
 & =\sqrt{\left(d_A -\Tr{\left(\rho_A^2\right)}\right) \left( 1-\Tr{\left( \rho_B^2 \right)}\right)}, \end{array} 
\end{equation}
where we used $ \|A^{LOOs}\|_{tr}=\Tr{A}=\sum_k^{d_A^2}\langle \hat{A}_k^2 \rangle -\langle \hat{A}_k \rangle^2  =d_A-\Tr{\left(\rho_A^2\right)}$ and $ \| B^{LOOs}-\kappa_B^{LOOs} \|_{tr} =\Tr{B^{LOOs}}-\Tr{\kappa_B^{LOOs}}=d_B-\Tr{\left(\rho_B^2\right)}-\left(d_B -1\right)=1-\Tr{\left(\rho_B^2\right)}$ \cite{Guhne07}. If a given state is nonsteerable from Bob to Alice, the upper bound in Eqs. (13) and (15) is $\sqrt{\left(d_B -\Tr{\left(\rho_B^2\right)}\right) \left( 1-\Tr{\left( \rho_A^2 \right)}\right)}$. We can also prove {\bf Proposition 1} without resort to the covariance matrix formalism in Appendix.  
\end{proof}
{\it Examples} - (i) Let us consider a noisy, {\it asymmetric}, two-qubit state that is written as
\begin{equation}
\label{nsinglet}
\rho_{AB}^{ns}=p\left| \psi^{-} \rangle \langle \psi^{-} \right| +\left(1-p \right) \rho_s,
\end{equation}
where $ | \psi^{-} \rangle = \frac{1}{\sqrt{2}} \left( |01 \rangle - |10 \rangle \right) $ and $ \rho_s = 2/3 |00 \rangle \langle 00| + 1/3 |01 \rangle \langle 01|$. Using the condition in Eq. (\ref{correlation1}), we find that the state in Eq. (\ref{nsinglet}) is steerable  from Alice to Bob  if $p  \gtrsim 0.53197 $, and steerable from Bob to Alice when $ p \gtrsim 0.53524 $. \\


(ii) Suppose now that Alice and Bob share a single pair of $3$-dimensional particles in the state
\begin{equation}
\label{statef}
\rho_{AB}^{F}=F | \Phi^{+} \rangle \langle \Phi^{+} | + \frac{1-F}{3}\left( |01\rangle \langle 01| + |12 \rangle \langle 12| + |20 \rangle \langle 20 | \right),
\end{equation}
where $ |\Phi^{+} \rangle =1/\sqrt{3} \left( |00\rangle + |11 \rangle +|22 \rangle \right)$ and $ 0 \leq F \leq 1$. It is straightforward to see that if $ F>\frac{1}{2} $, then $\rho_{AB}^{F}$ is steerable in both ways. \\

It is worth remarking on two properties of the inequality in Eq. (\ref{correlation1}). First, this inequality is asymmetric for the cases of $ \Tr{\left( \rho_A^2 \right)} \neq \Tr{\left( \rho_B^2 \right)}$. For instance, if we consider the two-qubit state in Eq. (\ref{nsinglet}), its reduced states $\rho_A^{ns}$ and $\rho_B^{ns}$ have asymmetric mixedness, i.e., $ \Tr{\left[\left(\rho_{A}^{ns}\right)^2\right]} \neq \Tr{\left[\left(\rho_{B}^{ns}\right)^2\right]}$, unless $ p =4/7,\, 2/5 $.   Second, the inequality in Eq. (\ref{correlation1}) is independent of the choice of local orthogonal observables (LOOs) in each party, because of the uniqueness of the singular values of the correlation matrix \cite{Horn2}. \\
We now make another proposition useful for detecting steerability.
\begin{propo}
If a given bipartite state $\rho_{AB}$ is nonsteerable from Alice to Bob, the following must be satisfied,
\begin{equation}
\label{Col2}
\Tr \left[ \left(C^{LOOs}\right)^T\, \left(A^{LOOs}\right)^{-1}\, C^{LOOs} \right]  \leq 1-\Tr{\left( \rho_B^2 \right)},
\end{equation} 
where $A^{-1}$ denotes the Moore-Penrose pseudoinverse of the matrix $A$ \cite{Horn2}. The term on the left-hand side of Eq. (\ref{Col2}) is independent of the choice of LOOs.  
\end{propo}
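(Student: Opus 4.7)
The plan is to feed the \textbf{Theorem} directly into the Schur complement machinery. By the Theorem, non-steerability from Alice to Bob gives
\begin{equation*}
\gamma_{AB}^{LOOs} - \mathbf{0}_A \oplus \kappa_B^{LOOs}
= \begin{pmatrix} A^{LOOs} & C^{LOOs} \\ (C^{LOOs})^T & B^{LOOs} - \kappa_B^{LOOs} \end{pmatrix} \geq 0.
\end{equation*}
First I would invoke the standard fact that a block matrix $\left(\begin{smallmatrix} P & Q \\ Q^T & R \end{smallmatrix}\right)$ is positive semidefinite if and only if $P\geq 0$, the columns of $Q$ lie in the range of $P$, and the generalized Schur complement $R - Q^T P^{+} Q \geq 0$, where $P^{+}$ is the Moore--Penrose pseudoinverse. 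This is exactly the reason the proposition is stated with a pseudoinverse rather than an ordinary inverse: $A^{LOOs}$ may be singular (e.g.\ when $\rho_A$ is pure). Applying this to the block above yields
\begin{equation*}
B^{LOOs} - \kappa_B^{LOOs} - (C^{LOOs})^T (A^{LOOs})^{+} C^{LOOs} \geq 0.
\end{equation*}

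Next I would take the trace of this operator inequality. Using the identities already verified in the proof of Proposition~1, namely $\Tr(B^{LOOs}) - \Tr(\kappa_B^{LOOs}) = 1 - \Tr(\rho_B^2)$, I get
\begin{equation*}
\Tr\!\bigl[(C^{LOOs})^T (A^{LOOs})^{+} C^{LOOs}\bigr] \leq 1 - \Tr(\rho_B^2),
\end{equation*}
which is precisely \eqref{Col2}. If one worries about the degenerate case where $A^{LOOs}$ is singular but the range condition needs checking, I would instead regularize by replacing $A^{LOOs}$ with $A^{LOOs}+\varepsilon \openone$, apply the ordinary Schur complement to obtain the strict inequality, and take $\varepsilon\to 0^{+}$; the range condition on $C^{LOOs}$ is automatic from the overall positivity.

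Finally, for the asserted independence of the choice of LOOs, I would use the fact that any two sets of LOOs on $\mathcal{H}_A$ are related by a real orthogonal transformation $O_A$ acting on the index (since LOOs are orthonormal bases of the Hermitian operator space under the Hilbert--Schmidt inner product), and similarly $O_B$ on Bob's side. Under this change, $A^{LOOs} \mapsto O_A A^{LOOs} O_A^T$ and $C^{LOOs} \mapsto O_A C^{LOOs} O_B^T$, hence $(A^{LOOs})^{+} \mapsto O_A (A^{LOOs})^{+} O_A^T$ and
\begin{equation*}
(C^{LOOs})^T (A^{LOOs})^{+} C^{LOOs} \;\mapsto\; O_B\,(C^{LOOs})^T (A^{LOOs})^{+} C^{LOOs}\,O_B^T,
\end{equation*}
so its trace is invariant. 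The main subtlety I expect to have to argue carefully is precisely the pseudoinverse/range-condition step in the Schur complement when $A^{LOOs}$ fails to be invertible; the $\varepsilon$-regularization route is the cleanest workaround.
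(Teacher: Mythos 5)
Your proposal is correct and follows essentially the same route as the paper: the paper also applies the generalized Schur-complement characterization of block positive semidefiniteness (with the kernel/range condition and the Moore--Penrose pseudoinverse, cited from Giedke \emph{et al.}) to $\gamma_{AB}-\mathbf{0}_A\oplus\kappa_B\geq 0$, takes the trace using $\Tr(B^{LOOs})-\Tr(\kappa_B^{LOOs})=1-\Tr(\rho_B^2)$, and proves LOO-independence via the same orthogonal-conjugation argument. Your $\varepsilon$-regularization remark is a harmless extra safeguard not present in the paper.
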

\begin{proof}
Let us consider a real symmetric matrix with a block structure as in Eq. (\ref{block}). Then the following statements are equivalent \cite{Giedke01} : (a) $ \Lambda \geq 0$, (b) $\ker{\left( \Lambda_{22} \right)} \subset \ker{\left( \Lambda_{12} \right)}$ and $\Lambda_{11}-\Lambda_{12} \Lambda_{22}^{-1} \Lambda_{12}^T$, and (c) $\ker{\left( \Lambda_{11} \right)} \subset \ker{\left( \Lambda_{12}^T \right)}$ and $ \Lambda_{22}-\Lambda_{12}^T \Lambda_{11}^{-1} \Lambda_{12} \geq 0$. \\

Applying the conditions (a) and (c) to Eqs. (\ref{BCM}) and (\ref{covariance}) leads to $B^{LOOs}-\left(C^{LOOs}\right)^T\,\left(A^{LOOs}\right)^{-1}\,C^{LOOs} \geq \kappa^{LOOs}_B \geq 0$. Since $ \Tr{\left(\kappa^{LOOs}_B\right)} = d_B-1$ and $\Tr{ \left(B^{LOOs}\right) }= d_B -\Tr{\left( \rho_B^2 \right)}$ \cite{Guhne06, Guhne07}, we obtain the desired inequality in Eq. (\ref{Col2}).  \\

It is known that given a set $\left\{ A_k \right\}$ of LOOs, any other set $ \left\{ \tilde{A}_l \right\}$ of LOOs has the form $\tilde{A}_l=\sum_{k}O_{lk} A_k $, where $O_{lk}$ is an entry of an arbitrary $ d_A^2 \times d_A^2 $ real orthogonal matrix $ O_A$ \cite{Yu05}. Similarly, there is a $ d_B^2 \times d_B^2 $ real orthogonal matrix $ O_B$ that transforms a set $\left\{ B_k \right\} $ to another set $ \left\{ \tilde{B}_l \right\} $. If we choose the sets $ \left\{ \tilde{A}_l \right\}$ and $ \left\{ \tilde{B}_l \right\} $ of LOOs instead of $\left\{ A_k \right\}$ and $\left\{ B_k \right\} $, the covariance matrix in Eq. (\ref{BCM}) is given by
\begin{equation}
\label{BCM2}
\tilde{\gamma}_{AB}=\left[ \begin{array}{cc}
\tilde{A} & \tilde{C} \\
\tilde{C}^T & \tilde{B} 
\end{array} \right] =  \left[ \begin{array}{cc}
O_A\, A^{LOOs} \, O_A^T & O_A\, C^{LOOs} \, O_B^T \\
O_B\, \left(C^{LOOs}\right)^T\, O_A^T & O_B\, B^{LOOs} \, O_B^T \\ 
\end{array} \right].
\end{equation}
Then the term on the left-hand side of Eq. (\ref{Col2}) is given by
\begin{widetext}
\begin{equation}
\begin{array}{ll}
\Tr{ \left( \tilde{C}^T\, \tilde{A}^{-1}\, \tilde{A}\right)} & =  \Tr{\left[\left( O_A C^{LOOs} O_B^T \right)^T\,\left( O_A A^{LOOs} O_A^T \right)^{-1} \left(O_A C^{LOOs} O_B^T \right)\right] } \\
                                                            & =  \Tr{\left[O_B \left(C^{LOOs}\right)^T O_A^T O_A \left(A^{LOOs}\right)^{-1} O_A^T O_A C^{LOOs} O_B^T \right] } 
                                                             =  \Tr{\left[\left(C^{LOOs}\right)^T \left(A^{LOOs}\right)^{-1} C^{LOOs} \right]},
                                                            \end{array}
\end{equation}
\end{widetext}
where we used the definition of orthogonal matrices and the permutation invariance of the trace of a matrix . 
\end{proof}
{\it Examples}- (iii) Let us consider a two-qubit Werner state given by
\begin{equation}
\label{werner}
\rho_{AB}^{W}=p|\psi^{-}\rangle \langle \psi^{-}| + \frac{\left(1-p\right)}{4} \openone \otimes \openone,
\end{equation}
where $ 0 \leq p \leq 1$. Using Eq. (\ref{Col2}) one finds that $ \rho_{AB}^{W} $ is steerable from Alice to Bob and from Bob to Alice for $ p\, >\, \frac{1}{\sqrt{3}} $. 

(iv) Let us consider a two-qutrit state represented by
\begin{equation}
\label{ex5}
\rho_{AB}^{F'}=F'| \Phi^{+} \rangle \langle \Phi^{+} | + \frac{1-F'}{2}  \left( \rho_1 + \rho_2 \right),
\end{equation}
where $ \rho_1 =1/3 \left( |01 \rangle \langle 01 | + | 12 \rangle \langle 12 | + |20 \rangle \langle 20| \right)$, $\rho_2 =1/3 \left( |02 \rangle \langle 02| + | 10 \rangle \langle 10 | + | 21 \rangle \langle 21 | \right)$, and $ 0 \leq F' \leq 1 $. We can check that if $ F' > 1/7 \left( 1+ 2\sqrt{2} \right) \approx 0.5469 $, then the state in Eq. (\ref{ex5}) is steerable in both ways. \\ \\

We now turn our attention to CV systems. 
\begin{propo}
Consider a bipartite quantum state $\rho^{CV}_{AB}$ of $ M \times N$-modes of continuous variables. If we choose local observables as $\left\{ A_k \times \openone_{B_N},\; \openone_{A_M} \otimes B_k  \right\} = \left\{ X^{(1)}_{A_k}\otimes \openone_{B_N}, P^{(1)}_{A_k}\otimes \openone_{B_N}, \openone_{A_M}\otimes X^{(1)}_{Bl}, \openone_{A_M} \otimes P^{(1)}_{B_l} \right\}$, where $ \openone_{A_M}= \openone^{\otimes M}_{A}$, $ \openone_{B_N}=\openone^{\otimes N}_{B}$, $k=1,...,M$, $l=1,...,N$ and $ A_k, B_l$ denote the $k$-th and $l$-th mode in the parties which are held by Alice and Bob, respectively. Here we omit the tensor products in each party. For instance, $ X^{(1)}_{A1} \otimes \openone_{B_N} $ means $ X^{(1)}_{A1} \otimes \openone_{A2} \otimes \cdots \openone_{AM} \otimes \openone_{B_N}$. If a state $ \rho^{CV}_{AB}$ is nonsteerable from Alice to Bob, the covariance matrix $\gamma_{AB}^{CV}$  that is constructed with those local observables has to satisfy 
\begin{equation}
\label{CVC}
\gamma_{AB}^{CV} \geq {\bf 0}_A \oplus i\Omega_B,
\end{equation} 
where $\Omega_B = \oplus_{i=1}^{N} \Omega_{Bi}$ with $\Omega_{Bi}= \frac{1}{2} \left( \begin{array}{rr} 0\;\;1 \\
-1\;\;0 \end{array} \right) $. If this inequality is violated then $ \rho^{CV}_{AB}$ is steerable from Alice to Bob.
\end{propo}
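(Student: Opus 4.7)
The plan is to specialize the general covariance-matrix non-steerability criterion (\textbf{Theorem}) to the CV setting and to show that the ``quantum statistics'' lower bound $\kappa_B$ of that theorem reduces to the symplectic bound $i\Omega_B$ whenever the chosen local observables on Bob's side are the canonical quadratures $\{X^{(1)}_{B_l},P^{(1)}_{B_l}\}_{l=1}^{N}$. The proposition should then follow without any additional CV-specific machinery beyond the matrix form of the Heisenberg--Robertson--Schr\"odinger uncertainty principle.

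First, I would invoke the \textbf{Theorem} directly with the observable set $\{X^{(1)}_{A_k}\otimes\openone_{B_N},P^{(1)}_{A_k}\otimes\openone_{B_N},\openone_{A_M}\otimes X^{(1)}_{B_l},\openone_{A_M}\otimes P^{(1)}_{B_l}\}$. Non-steerability from Alice to Bob then immediately gives
\begin{equation}
\gamma_{AB}^{CV}\;\geq\;{\bf 0}_A\oplus\kappa_B,\qquad \kappa_B=\sum_k p_k\,\gamma\!\left(|b_k\rangle\langle b_k|\right),
\end{equation}
where each $|b_k\rangle$ is a (possibly non-Gaussian) pure state on the $N$-mode Fock space of Bob and $\sum_k p_k|b_k\rangle\langle b_k|=\rho_B$. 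So the proposition reduces to showing that every such $\kappa_B$ is bounded below by $i\Omega_B$, which is the bona fide condition for bosonic covariance matrices.

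Second, I would invoke the matrix form of the uncertainty relation for the quadrature vector $\hat R=(X^{(1)}_{B_1},P^{(1)}_{B_1},\dots,X^{(1)}_{B_N},P^{(1)}_{B_N})^T$: for any state on the $N$-mode Hilbert space and any complex vector $\alpha$, positivity of $\langle F^\dagger F\rangle$ with $F=\sum_j\alpha_j(\hat R_j-\langle\hat R_j\rangle)$ gives, via the commutation relations $[\hat R_j,\hat R_k]=2i(\Omega_B)_{jk}$ (with the normalization convention $\Omega_{Bi}=\tfrac{1}{2}\bigl(\begin{smallmatrix}0&1\\-1&0\end{smallmatrix}\bigr)$ adopted in the statement), the Hermitian matrix inequality $\gamma(|b_k\rangle\langle b_k|)\geq i\Omega_B$. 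Since matrix positivity is preserved under convex combinations with non-negative weights, $\kappa_B=\sum_k p_k\gamma(|b_k\rangle\langle b_k|)\geq\sum_k p_k\,i\Omega_B=i\Omega_B$. Combining with the bound from the Theorem and using that ordering of block-diagonal direct sums is inherited componentwise yields $\gamma_{AB}^{CV}\geq{\bf 0}_A\oplus\kappa_B\geq{\bf 0}_A\oplus i\Omega_B$, which is Eq.~(\ref{CVC}). The converse statement -- that a violation of Eq.~(\ref{CVC}) certifies steerability -- is the contrapositive and requires no extra argument.

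The main obstacle I anticipate is purely bookkeeping rather than conceptual: one must verify that the $1/2$ factor built into the definition of $\Omega_{Bi}$ exactly matches the normalization of $X^{(1)}$ and $P^{(1)}$ used to build $\gamma_{AB}^{CV}$, so that the uncertainty relation is stated with the correct coefficient in front of $\Omega_B$. A secondary subtlety is that the decomposition $\rho_B=\sum_k p_k|b_k\rangle\langle b_k|$ may involve a continuous, non-Gaussian ensemble; this is harmless because the uncertainty bound $\gamma(|b\rangle\langle b|)\geq i\Omega_B$ holds for \emph{every} quantum state independently of any Gaussian assumption, and the convex-combination (or integral) of a family of matrices each dominating $i\Omega_B$ still dominates $i\Omega_B$. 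No further CV-specific input is required, and in particular the result does not assume $\rho_{AB}^{CV}$ to be Gaussian.
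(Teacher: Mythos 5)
Your proposal is correct and follows essentially the same route as the paper: specialize the \textbf{Theorem} to the quadrature observables, bound each $\gamma^{CV}\left(|b_k\rangle\langle b_k|\right)$ below by $i\Omega_B$ via the uncertainty principle, and use convexity of the ensemble average to conclude $\kappa_B^{CV}\geq i\Omega_B$. The only difference is that you spell out the uncertainty-relation step (positivity of $\langle F^{\dagger}F\rangle$) explicitly, where the paper simply cites Simon \emph{et al.}
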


\begin{proof}
The proof of {\bf Proposition 3} is straightforward. By {\bf Theorem}, if it is impossible to steer from Alice to Bob, we have that $ \gamma^{CV}_{AB} \geq {\bf 0}_A \oplus \kappa_B^{CV} $ where $ \kappa_B^{CV}= \sum_k p_k \gamma^{CV}\left( | b_k \rangle\langle b_k | \right)$. Since $ \gamma^{CV}\left( | b_k \rangle \langle b_k | \right) $ is the covariance matrix of a physical quantum state with regular quadrature amplitudes, we obviously have $ {\bf 0}_A \oplus \kappa_B^{CV} \geq {\bf 0}_A \oplus i\Omega_B $ due to uncertainty principle \cite{Simon}. Because the sum of two positive semidefinite operators is also positive semidefinite \cite{Horn2}, we obtain the desired inequality in Eq. (\ref{CVC}).
\end{proof}
The inequality in Eq. (\ref{CVC}) is indeed the nonsteerability criterion for Gaussian states under Gaussian measurements that Wiseman {\it et al.} have derived in Ref. \cite{Wiseman07}. \\

{\it Remarks}---After completion of this work \cite{nha}, we became aware of a related interesting article \cite{adess}, which proposes moment-based steering criteria 
in a form $\Gamma_{ij}=\langle S_i^\dag S_j\rangle$. The operators $S_i$ in \cite{adess} are chosen as a product of local operators $S_i=A_k\otimes B_l$ and there arise some unobservable matrix elements that are treated as free parameters with constraints due to joint-measurability on Alice's side and quantum algebra on Bob's side. 

In our case, the correlation matrix $C$ in Eq. (\ref{BCM}) addresses all observable moments from the outset. On the other hand, the local covariance matrix $A$ at Alice site contains moments $A_{i,j}=\left( \langle A_i A_j \rangle + \langle A_j A_i \rangle \right)/2 -\langle A_i \rangle \langle A_j \rangle$ that cannot be directly determined due to the incompatibility of local measurements $\{A_i\}$. Of course, one might define a new Hermitian observable $O_A\equiv A_i A_j+A_j A_i$ to evaluate it, which is however not acceptable in a rigorous steering test where Alice is fully untrusted. Nevertheless, we have shown in the proof of Theorem that the violation of the inequality (12) is equivalent to the existence of local observables $\{{\tilde A}_k\}$ and $\{{\tilde B}_k\}$, with which one can show the violation of inequality (6). As the latter inequality requires only observable moments, the strongest test of steering becomes always possible for whatever states violating the covariance matrix criterion as a matter of principle. 

It now becomes an interesting question how those observables $\{{\tilde A}_k\}$ and $\{{\tilde B}_k\}$ can be systematically obtained for a given covariance matrix. For those states violating the inequality (16), we immediately obtain the observables $\{{\tilde A}_k\}$ and $\{{\tilde B}_k\}$ using a singular-value decomposition (SVD) of the correlation matrix $C\equiv O^{(1)}\Lambda O^{(2)}$, where $O^{(i)}$ $(i=1,2)$ are orthogonal matrices and $\Lambda$ a diagonal matrix with non-negative entries. As we have shown in Appendix, the violation of the inequality (16) is equivalent to the violation of (A. 2)---a strongest form of steering test--- by defining another LOOs ${\tilde A}_k=\sum_iO_{ik}^{(1)}A_i$ and ${\tilde B}_k=\sum_iO_{ik}^{(2)}B_i$, where $O^{(i)}$ are the same orthogonal matrices in the SVD of $C\equiv O^{(1)}\Lambda O^{(2)}$ while $\{A_i\}$ and $\{B_i\}$ are the original LOOs constructing the covariance matrix.

{\it Conclusion}---We have derived steering criteria based on covariance matrices of local observables, which can be applied to both discrete and continuous variable quantum systems. We have particularly employed local orthogonal observables (LOOs) to obtain some readily computable, and experimentally testable, criteria useful to detect steerability for quantum systems of arbitrary dimensions. We have demonstrated that these are useful to detect a wide range of entangled states particularly for two qutrits, which can be further extended to higher-dimensional systems. 
Moreover, we have shown that the Gaussian steering criteria for CV systems of $M\times N$ modes are derived as a special case of our criteria. We hope our method could be a useful tool to identify a broader set of quantum steerable states and bring a deeper understanding to quantum correlations at large.


We thank Ioannis Kogias very much for a useful discussion on our manuscript. This work is supported by an NPRP grant from Qatar National Research Fund.

\appendix*
\setcounter{equation}{0}
\section{Appendix}
\subsection{Alternative Proof of Proposition 1}

If a given bipartite quantum state $\rho_{AB}$ is nonsteerable, we know that the inequality in Eq. (\ref{gLUR}) is satisfied,
\begin{equation}
\label{gLUR2}
\sum_k \delta^2\left( g_k \hat{A}_k \otimes \openone + \openone \otimes \hat{B}_k \right) \geq C_B.
\end{equation}
Now let us consider the case that the observables $\hat{A}_k$, $\hat{B}_k$ are LOOs, i.e., $ \Tr\left({\hat{A}_k \hat{A}_l}\right)=\Tr\left({\hat{B}_k \hat{B}_l}\right)=\delta_{kl}$, and set $g_1=g_2=...=g$. Then we directly derive, from Eqs. (\ref{gLUR}) and (\ref{BLUR}), the criterion
\begin{equation}
\label{gLUR3}
\sum_{k} \delta^2\left(g\hat{A_k}\otimes\openone_B+ \openone_A \otimes \hat{B_k} \right) \geq d_B -1.
\end{equation}   
Thus any quantum state violating Eq. (\ref{gLUR3}) is steerable from Alice to Bob. Let us now set the real number $g$ to make the term on the left-hand side of Eq. (\ref{gLUR3}) as small as possible, 
\begin{equation}
\label{gg}
\begin{array}{lll}
& \frac{\partial \left( \sum_{k} \delta^2\left(g\hat{A}_k\otimes\openone_B+\openone_A \otimes \hat{B}_k \right) \right)}{\partial g } \\
& =2g\left( \sum_{k} \delta^2 \hat{A}_k \right) +2 \left( \sum_k{ \langle \hat{A}_k \otimes \hat{B}_k \rangle -\langle \hat{A}_k\rangle\langle \hat{B}_k \rangle } \right)=0 , \\
& \Rightarrow \;\; g=-\frac{\sum_{k}\langle \hat{A}_k \otimes \hat{B}_k \rangle -\langle \hat{A}_k \rangle\langle \hat{B}_k \rangle}{\sum_{k} \delta^2 \hat{A}_k}.
\end{array}
\end{equation} 
Substituting $g$ in Eq. (\ref{gg}) into the inequality (\ref{gLUR3}) and using the properites of LOOs, we obtain the simple relation for non-steerable quantum states from Alice to Bob 
\begin{equation}
\label{correl2}
\left| \sum_k{\langle \hat{A}_k\otimes \hat{B}_k \rangle - \langle \hat{A}_k \rangle \langle \hat{B}_k \rangle} \right| \leq \sqrt{\left(d_A -\Tr{\left(\rho_A^2\right)}\right) \left( 1-\Tr{\left( \rho_B^2 \right)}\right)}.
\end{equation}
The inequality in Eq. (\ref{correl2}) is always satisfied for all bipartite non-steerable states from Alice and Bob with arbitrary sets of LOOs in ${\mathcal H}_A$ and ${\mathcal H}_B$. We can now choose the LOOs that make the left-hand side of Eq. (\ref{correl2}) largest. To this aim, we use the correlation matrix $C$ ($d_A^2 \times d_B^2$ real matrix) of LOOs in Eq. (\ref{BCM}), which can be brought to a singular value decomposition with real orthogonal matrices \cite{Horn2}. The transformed LOOs by orthogonal matrices are also LOOs and we can reformulate the inequality (\ref{correl2}) as
\begin{equation}
\| C^{LOOs}\|_{tr} \leq \sqrt{\left(d_A -\Tr{\left(\rho_A^2\right)}\right) \left( 1-\Tr{\left( \rho_B^2 \right)}\right)}. 
\end{equation}

\end{document}